\newcommand{\tr}{^{\text{T}}}
\newtheorem{thm}{Theorem}
\newtheorem{lem}{Lemma}
\newenvironment{proof}{\paragraph{Proof:}}{\hfill\null}
\def\R{\in \mathbb{R}}
\def\cov{\text{Cov}}
\def\lv{\left\Vert}
\def\rv{\right\Vert}
\def\E{\mathbb{E}}
\begin{document}

\title{A PRESS statistic for two-block partial least squares regression}
\author{Brian McWilliams and Giovanni Montana\thanks{Email: {\tt g.montana@imperial.ac.uk}} \\Statistics Section, Department of Mathematics, Imperial College London, London, UK}

\maketitle

\begin{abstract}
Predictive modelling of multivariate data where both the covariates and responses are high-dimensional is becoming an increasingly popular task in many data mining applications. Partial Least Squares (PLS) regression often turns out to be a useful model in these situations since it performs dimensionality reduction by assuming the existence of a small number of latent factors that may  explain the linear dependence between input and output. In practice, the number of latent factors to be retained, which controls the complexity of the model and its predictive ability, has to be carefully selected. Typically this is done by cross validating a performance measure, such as the predictive error. Although cross validation works well in many practical settings, it can be computationally expensive. Various extensions to PLS have also been proposed for regularising the PLS solution and performing simultaneous dimensionality reduction and variable selection, but these come at the expense of additional complexity parameters that also need to be tuned by cross-validation. In this paper we derive a computationally efficient alternative to leave-one-out cross validation (LOOCV), a predicted sum of squares (PRESS) statistic for two-block PLS. We show that the PRESS is nearly identical to LOOCV but has the computational expense of only a single PLS model fit. Examples of the PRESS for selecting the number of latent factors and regularisation parameters are provided. 
\end{abstract}

\section{Introduction}
In this work we consider regression settings characterised by an $X\in \mathbb{R}^{n\times p}$ matrix of $p$ covariates and an $Y\in \mathbb{R}^{n\times q}$ matrix of $q$ responses, both observed on $n$ objects. Assuming centred data, the standard regression approach consists of fitting a Multivariate Linear Regression (MLR) model, that is $Y=X\beta + \epsilon$, where $\beta = (X\tr X)^{-1}X\tr Y \in \mathbb{R}^{p\times q}$ is a matrix of regression coefficients and $\epsilon \in \mathbb{R}^{n\times q}$ is the matrix of uncorrelated, mean-zero errors. In many situations where the dimensionality of the covariates is very high, the problem of multicollinearity prevents $X\tr X$ from being invertible. Furthermore, it can be shown that the columns, $[\beta_{1},...,\beta_{q}]$, of the matrix of regression coefficients $\beta$ are in fact the coefficients of the regression of $X$ on the individual response variables, $[y_{1},...,y_{q}]$, respectively (see, for instance, \cite{Izenman2008}). This implies that the least squares estimate for $\beta$ is equivalent to performing $q$ separate multiple regressions. Therefore, the MLR solution contains no information about the correlation between variables in the response. 

A common solution to the problems introduced by multicollinearity and multivariate responses involves imposing constraints on $\beta$ which improves prediction performance by effectively performing dimensionality reduction. Two-block Partial Least Squares (PLS) is a technique which performs simultaneous dimensionality reduction and regression by identifying two sets of latent factors underlying the data which explain the most covariance between covariates and response where the two blocks refer to the setting where both the covariates and the response are multivariate. These latent factors are then used for prediction rather than the full data matrices $X$ and $Y$. As such, PLS is used in many situations where the number of variables is very large. Two-block PLS is described in detail in section \ref{subs-pls}.

There are several modern applications for which  two-block PLS regression is particularly suitable. PLS is widely used in the field of chemometrics where there are many problems which involve predicting the properties of chemicals based on their composition where the latent factors capture the underlying chemical processes \cite{Wold2001}. In computational finance, PLS has been used to identify the latent factors which explain the movement of different markets in order to predict stock market returns \cite{Cengiz2010}. 

Closely related is the field of multi-task learning in which many tasks with univariate output are learned simultaneously. Using information from multiple tasks improves the performance and so can be treated as one large problem with a multivariate response (see, for example \cite{Argyriou2008}). For example, there are similarities in the relevance of a particular web page to a given query depending on geographical location and so learning the predictive relationship between query terms and web pages across multiple locations helps improve the relevance of the returned pages \cite{Chapelle2010}. In applications involving web search data, $n$ can be extremely large, in the order of hundreds of thousands.



One critical issues that arises when using PLS regression to attack such problems relates to the selection of the number of latent factors, $R$, to include in the model. The choice of $R$ is extremely important: if $R$ is too small, important features in the data may not be captured; if $R$ is too large the model will overfit. Furthermore, the number of latent factors can be important in interpreting the results. Recently, several extensions to PLS have been proposed to improve prediction in several settings. When there are many irrelevant variables which do not contribute to the relationship between $X$ and $Y$ (i.e. the underlying model of the data is sparse), we can regularize the latent factors to remove the contribution of noise variables from the regression coefficients. However, better prediction performance comes at the cost of introducing more important parameters such as the degree of sparsity which must be tuned. The success of these and the many other extensions to PLS depends on the careful tuning of these parameters.
 
For decades, performing $K$-fold cross validation (CV) on the prediction error has been a popular tool for model selection in linear models \cite{Stone1974a,Wahba1977}. In PLS regression, model selection has also been commonly performed using Leave-one-out cross validation (LOOCV) \cite{Cao2008} and $K$-fold CV \cite{Chun2010}. The CV procedure involves repeated fitting of the model using subsets of the data in the following way. The data is split into $K$ equal sized groups and the parameters are estimated using all but the $k^{th}$ group of observations which is used for testing. The $K$ groups are each left out in turn and the $K$-fold cross validated error is given by
\[
E_{cv} = \frac{1}{K} \sum_{k=1}^{K}(y_{k}-X_{k}\beta(k))^{2} \text{ ,}
\]
where the subscript $k$ denotes only the observations in the $k^{th}$ group are used whereas $(k)$ denotes the estimate of $\beta$ obtained by leaving out the $k^{th}$ group.

The choice of $K$ is important as it has implications on both the accuracy of the model selection as well as its computational cost. 
When $K=n$ we obtain leave-one-out cross validation (LOOCV) where the parameters are estimated using all but one observation and evaluated on the remaining observation. LOOCV is a popular choice for model selection as it makes most efficient use of the data available for training. LOOCV also has the property that as the number of samples increases, it is asymptotically equivalent to the Akaikie Information Criterion (AIC) which is a commonly used for model selection in a variety of applications \cite{Stone1977}. However LOOCV can be extremely computationally expensive if $n$ or $p$ is large. 

The use of these techniques are subject to the same problems as in OLS, however, since PLS is typically used in situations where the data is very high dimensional and the sample size is small, the problems can be amplified. When $n$ is small compared to $p$ and $q$, constructing $K$ cross-validation sets is wasteful and the resulting model selection becomes less accurate. Similarly, although LOOCV selects the true model with better accuracy under these conditions, the computational cost becomes prohibitive when $p$, $q$ or $n$ are large since the complexity of PLS is of order $O(n(p^2q + q^2p))$ \cite{Golub1996}. Problems of performing model selection in PLS are discussed by \cite{McWilliams2010}. 

It has long been known that for ordinary least squares, the LOOCV of the prediction error has an exact analytic form known as the Predicted Sum of Squares (PRESS) statistic \cite{Belsley1980}. Using this formulation it is possible to rewrite the LOOCV for each observation as a function of only the residual error and other terms which do not require explicit partitioning of the data. We briefly review this result in Section \ref{subs-ols}. However, such a method for computing PRESS efficiently for two-block PLS has not been developed in the literature. In this work we derive a analytic form of PRESS for two-block PLS regression based on the same techniques as PRESS for OLS which we present in section \ref{subs:plspress}. In section \ref{sec:proof} we show that, under mild assumptions, the PRESS is equivalent to LOOCV up to an approximation error of order $O (\sqrt{\frac{\log n}{n}})$. In section \ref{sec:applications} we illustrate how the PLS PRESS can be used for efficient model selection with an application to Sparse PLS where the PRESS is used to select the number of latent factors $R$ and the regularization parameter controlling the degree of sparsity in the regression coefficients. Finally, we report on experiments performed using data simulated under the sparse PLS model and show that the PRESS performs almost exactly as  LOOCV but at a much smaller computational cost.



\section{PLS Regression} \label{subs-pls} 

PLS is a method for performing simultaneous dimensionality reduction and regression by identifying a few latent factors underlying the data which best explain the covariance between the covariates and the response. Regression is then performed using these lower dimensional latent factors rather than the original data in order to obtain better prediction performance since only the features in the data which are important for prediction are used.

The two-block PLS model assumes $X$ and $Y$ are generated by a small number $R$, of latent factors \cite{Rosipal2006, Wegelin2000, McWilliams2010}
\begin{eqnarray}
X = TP\tr + E_x \nonumber ~,~~ Y = SQ\tr + E_y, 
\label{pls_model}
\end{eqnarray} 
where the columns of $T$ and $S \R ^{n\times R}$ are the $R$ latent factors of $X$ and $Y$ and $P\R^{p \times R}$ and $Q \R^{q \times R}$ are the factor loadings. $E_x\R ^{n \times p}$ and $E_y \R ^{n \times q}$ are the matrices of residuals. The latent factors are linear combinations of all the variables and are found so that
\[
 \cov (T,S)^{2} = \max_{U,V} \cov (XU,YV)^{2},
\]
where $U \R ^{p \times R}$ and $V \R ^{ q \times R}$ are found by computing the singular value decomposition (SVD) of $M=X\tr Y $ as $M=UGV\tr$ where $U$ and $V$ are orthogonal matrices and $G$ is a diagonal matrix of singular values. There exists an ``inner-model'' relationship between the latent factors
\begin{equation}
S = TD + H,
\label{eq:inner}
\end{equation}
where $D\R ^{R\times R}$ is a diagonal matrix and $H$ is a matrix of residuals so that the relationship between $X$ and $Y$ can be written as a regression problem
\begin{eqnarray}
Y & = & T D Q\tr + (HQ \tr + E_y) \nonumber \\
  & = & X\beta + E_y^{*}, \nonumber
\end{eqnarray}
where $\beta= UDQ\tr$.
Since $U$ and $V$ are computed using the first $R$ singular vectors of the SVD of $M$, the $R$ PLS directions are orthogonal and so the latent factors can be estimated independently.

\section{PRESS statistic}

\subsection{The PRESS statistic for OLS regression} \label{subs-ols}
In the least squares framework, leave one out cross validation is written as
\[
E_{LOOCV} = \frac{1}{n} \sum_{i=1}^{n} \left( y_i - x_i\beta^{OLS}(i) \right)^{2}, 
\]
where $\beta^{OLS}$ are the OLS regression coefficients. The subscript $i$ denotes the $i^{th}$ observation whereas $(i)$ denotes the estimate of $\beta$ using the observations $(1,...,i-1, i+1,...,n)$. Given a solution $\beta(j)$, $\beta(i)$, $j\neq i$ is estimated by adding the $j^{th}$ observation and removing the $i^{th}$ observation. Estimating $\beta$ requires computing the inverse covariance matrix, $P=(X\tr X)^{-1}$ which is computationally expensive. However, since each $\beta(i)$ is different from $\beta$ by only one sample, we can easily compute $P(i)=(X(i)\tr X(i))^{-1}$ from $P$ using the Morrison-Sherman-Woodbury theorem 
without the need to perform another matrix inversion \cite{Belsley1980}:
\begin{align*}
\left(X(i) \tr X(i)\right)^{-1} & = \left(X\tr X - x_i\tr x_i \right)^{-1} \\ 
 									              &= P + \frac{P x_i x_i \tr P}{{1 - h_i}},
\end{align*}
where $h_i = x_i\tr P x_i$. This allows the leave-one-out estimate, $\hat{\beta}(i)$ to be written as a function of $\hat{\beta}$ in the following way, without the need to explicitly remove any observations
\begin{align*}
\hat{\beta}(i)  & =  \left( X(i)\tr X(i) \right)^{-1}X(i)\tr y(i) \\ 
		 		  & =  \left( \hat{\beta} - \frac{(y_i - x_i \beta) P x_i}{1 - h_i} \right).
\end{align*}
Finally, the $i^{th}$ LOOCV residual can simply be written as a function of the $i^{th}$ residual error and does not require any explicit permutation of the data, as follows
\begin{align}
e(i)		 &= \frac{e_i}{1-h_i}. \nonumber
\end{align}

In the next section we derive a similar formulation for the PRESS for PLS.

\subsection{A PRESS statistic for two-block PLS}

\label{subs:plspress}

As described in section \ref{subs-pls}, estimating the PLS regression coefficients involves first estimating the inner relationship between latent factors and then the outer relationship between $S$ and $Y$. Both of these steps are the result of lower dimensional least squares regression problems. Using the same techniques as in section \ref{subs-ols}, we can derive a similar expression for the PLS PRESS statistic in two steps.
However, in PLS the latent factors are estimated by projecting the data onto the subspace spanned by $U$ and $V$ respectively which are found using the SVD of $M=UGV\tr$. 

Our developments rely on a mild assumption: provided the sample size $n$ is sufficiently large, any estimate of the latent factors obtained using $n-1$ samples is {\it close enough} to the estimate obtained using $n$ data points. In terms of the PLS model, this assumption relates to the difference between $g_r(M_n)$, the $r^{th}$ singular value of $M_{n}$ (which has been estimated using $n$ rows of $X$ and $Y$) and $g_r(M_{n-1})$, estimated using $n-1$ rows.

Formally, we assume that the following inequality 
\begin{equation}
\left| g_r(M_n) - g_r(M_{n-1})  \right| \leq \epsilon  \label{assumption}
\end{equation}
holds for $1\leq r \leq R$ where the approximation error, $\epsilon$ is arbitrarily small. Since the rank $r$ approximation error of the SVD is given by $g_{r+1}$, if the difference  between the pairs of singular values is small, it implies the difference between the corresponding pairs of singular vectors is also small. In other words, within the LOOCV iterations, it is not necessary to recompute the SVD of $X(i)\tr Y(i)$. We show that this assumption holds in Section \ref{sec:proof}. 

This assumption \eqref{assumption} implies that the $i^{th}$ PLS prediction error is $e(i)=y_i - x_i U D(i) Q(i)\tr$. Since the PLS inner model coefficient, $D$ in Eq \eqref{eq:inner} is estimated using least squares, we can derive an efficient update formula for $D(i)$ as a function of $D$ using the Morrison-Sherman-Woodbury theorem 
\begin{align*}
D(i) & = \left( T(i)\tr T(i) \right)^{-1}T(i)\tr S(i) \nonumber \\
		 & = \left( D - \frac{(s_i - t_i D) P_t t_i}{1 - h_t} \right),
\end{align*}
where $P_t = (T\tr T)^{-1}$ is the inverse covariance matrix of the $X-$latent factors and $h_{t,i} = t_i \tr P_t t_i$ is analogous to the OLS hat-matrix. Similarly, the $Y-$loading $Q$ is also obtained using least squares and we can derive a similar update formula as follows
$$
Q(i)      = \left( Q - \frac{(y_i - s_i Q) P_s s_i}{1 - h_s} \right),
$$
where $P_s = (S\tr S)^{-1}$ and $h_{s,i} = s_i \tr P_s s_i$. Under the assumption \eqref{assumption}, $U$ and $V$ are fixed and so these recursive relationships allow us to formulate the $i^{th}$ PLS prediction error, $e(i)$ in terms of the $i^{th}$ PLS regression residual error $e_i$ in the following way for one latent factor:
\begin{eqnarray}
e^{(r)}(i) &=&  y_i - x_i \sum_{r=1}^{R} U^{(r)} D^{(r)}(i) Q^{(r)}(i) \nonumber \\
		 &=& \sum_{r=1}^{R}  e^{(r)}_i \left[ \frac{1+a}{(1-h_s)} + \frac{-a-b}{(1-h_t)(1-h_s)} \right], \nonumber 
\end{eqnarray}
where the following identities from the PLS model have been used: $s_i Q\tr = y_i - E_{y,i}$ $t_i=x_iU$, $s_i=x_iUD$ and $\beta = x_iUDQ\tr$.
where: $a=1-E_y/e_i$ and $b=y_i H_i P_s s_i/e_i$.

Since each of the PLS latent factors is estimated independently, for $R>1$ the overall leave one out error is simply the sum of the individual errors
\begin{eqnarray}
	e(i)	 &=& \sum_{r=1}^{R} e^{(r)}(i) - y_i(R-1). \nonumber
\end{eqnarray}
Finally, the overall PRESS statistic is computed as
\begin{eqnarray}
E_{PRESS} = \frac{1}{n}\sum_{i=1}^{n} \left\Vert e(i) \right\Vert_2^{2},
\end{eqnarray}
where $\lv \cdot \rv^2_2$ is the squared Euclidean norm.

\subsection{A bound on the approximation error}
\label{sec:proof}
The assumption, \eqref{assumption} is key to developing an efficient version of the PRESS for PLS. It ensures that the SVD need only be computed once with all the data instead of $n$ times in a leave-one-out fashion which results in a computational saving of $O(n(p^2q + q^2p))$ \cite{Golub1996} which becomes prohibitively expensive when both $p$ and $q$ are large. From a conceptual point of view, this assumption implicitly states that underlying the data there are true latent factors which contribute most of the covariance between $X$ and $Y$. Therefore, removing a single observation should not greatly affect the estimation of the latent factors. In this section we  formalise this assumption by introducing a theorem which places an upper bound on error $\epsilon$. 

In presenting the theorem, we first rely on two Lemmas. The first result, from details an upper bound on the difference between the expected value of a covariance matrix, of a random vector $x$ and its sample covariance matrix using $n$ samples.
\begin{lem}[adapted from  \cite{Rudelson2007}]
\label{thm:rud}
Let $x$ be a random vector in $\mathbb{R}^p$. Assume for normalization that $\Vert \E x\tr x\Vert_2 \leq 1$. Let $x_1,...,x_n$ be independent realizations of $x$.
Let
\[
m_n=C\sqrt{\frac{\log n}{n}}A
\] 
where $C$ is a constant and $A=\lv x \rv _2$. If $m<1$ then
\[
\E\left\Vert \frac{1}{n} \sum_{i=1}^{n} x_i\tr x_i - \E x\tr x \right\Vert_2 \leq m_n
\]
\end{lem}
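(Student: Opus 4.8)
The plan is to follow Rudelson's original route: symmetrise the centred empirical sum, apply the non-commutative Khintchine inequality to the resulting Rademacher average, and then resolve the self-referential bound that emerges. Write $Z = \frac{1}{n}\sum_{i=1}^n x_i\tr x_i - \E x\tr x$ for the object to be controlled and set $\Delta = \E\lv Z\rv_2$, which is exactly the target of the bound. I will read $A$ as an almost-sure bound on $\lv x\rv_2$, so that $\max_i \lv x_i\rv_2 \le A$, and I will let $C$ denote a constant whose value may change from line to line.

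First I would symmetrise. Since the matrices $W_i = x_i\tr x_i$ are independent and identically distributed, the standard symmetrisation inequality gives
\[
\E\lv \sum_{i=1}^n (W_i - \E W_i)\rv_2 \le 2\,\E\lv \sum_{i=1}^n \varepsilon_i\, x_i\tr x_i \rv_2,
\]
where $\varepsilon_1,\dots,\varepsilon_n$ are independent Rademacher signs drawn independently of the $x_i$. The left-hand side equals $n\Delta$, so it remains to bound the Rademacher average on the right.

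Second, and this is the crux, I would invoke the non-commutative Khintchine (Rudelson) inequality to control the conditional expectation over the signs,
\[
\E_\varepsilon \lv \sum_{i=1}^n \varepsilon_i\, x_i\tr x_i \rv_2 \le C\sqrt{\log n}\;\max_i \lv x_i\rv_2 \;\lv \sum_{i=1}^n x_i\tr x_i \rv_2^{1/2}.
\]
Taking the expectation over the $x_i$, bounding $\max_i\lv x_i\rv_2$ by $A$, and applying Jensen's inequality to move the square root outside the expectation yields
\[
\E\lv \sum_{i=1}^n \varepsilon_i\, x_i\tr x_i\rv_2 \le C\sqrt{\log n}\,A\,\Big(\E\lv \sum_{i=1}^n x_i\tr x_i\rv_2\Big)^{1/2}.
\]

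Finally I would close the estimate by expressing $\E\lv \sum_i x_i\tr x_i\rv_2$ back in terms of $\Delta$. By the triangle inequality and the normalisation $\lv \E x\tr x\rv_2 \le 1$ we have $\E\lv \sum_{i=1}^n x_i\tr x_i\rv_2 \le n\Delta + n\lv \E x\tr x\rv_2 \le n(\Delta+1)$. Substituting this into the previous display and dividing through by $n$ produces the self-referential inequality $\Delta \le m_n\sqrt{\Delta+1}$, where $m_n = C\sqrt{\log n/n}\,A$ once the factor from symmetrisation and the Khintchine constant are absorbed into $C$. Squaring and solving the quadratic gives $\Delta \le \tfrac12\big(m_n^2 + m_n\sqrt{m_n^2+4}\big)$, and the hypothesis $m_n<1$ keeps this root within a constant multiple of $m_n$, which is the claim. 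The main obstacle is the non-commutative Khintchine step: it is the only place where the $\sqrt{\log n}$ factor and the dimension-free operator-norm control genuinely enter, and arranging the matrix square root $\lv \sum x_i\tr x_i\rv_2^{1/2}$ so that it feeds cleanly into the self-referential bound is what makes the scheme deliver the stated rate. The symmetrisation and the final algebra are routine by comparison.
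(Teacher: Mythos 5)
The paper does not actually prove this lemma---it is imported from \cite{Rudelson2007}---and your proposal correctly reconstructs the argument given there: symmetrisation, Rudelson's non-commutative Khintchine bound $\E_{\varepsilon}\lv \sum_{i}\varepsilon_i x_i\tr x_i\rv_2 \le C\sqrt{\log n}\,\max_i\lv x_i\rv_2\,\lv\sum_i x_i\tr x_i\rv_2^{1/2}$, and the resolution of the self-referential inequality $\Delta\le m_n\sqrt{\Delta+1}$ under the hypothesis $m_n<1$. Your reading of $A$ as an almost-sure bound on $\lv x\rv_2$ is the right repair of the lemma's loose statement, and absorbing the symmetrisation factor and the root of the final quadratic into the unspecified constant $C$ is legitimate, so the proof is sound and matches the cited source's approach.
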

\bigskip
The second Lemma is a result from matrix perturbation theory which details an upper bound on the maximum difference between the singular values of a matrix $M$ and a perturbation matrix $M+E$.
\begin{lem}[adapted from \cite{Sewart1990}]
For $M,E \R^{n\times p}$,
\[
\max_i\vert g_i(M+E) - g_i(M)\vert \leq \Vert E \Vert_2
\]
\label{lem:sv}
\end{lem}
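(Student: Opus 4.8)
The plan is to prove this as a standard singular-value perturbation (Weyl-type) bound via the variational min-max characterisation of singular values. Since $M$ and $M+E$ are rectangular, I would work with their $\min(n,p)$ singular values arranged in non-increasing order, $g_1 \geq g_2 \geq \cdots$, and recall the Courant--Fischer characterisation
\[
g_i(M) = \max_{\dim(S)=i}\ \min_{\substack{u\in S\\ \|u\|_2=1}} \|Mu\|_2 ,
\]
where the maximum ranges over $i$-dimensional subspaces $S \subseteq \mathbb{R}^p$. The key virtue of this formulation is that it pins down each singular value individually, so the eventual bound will hold index-by-index rather than only for the leading singular value.

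The main computational step is a uniform pointwise estimate. For any unit vector $u$, the reverse triangle inequality together with the definition of the spectral norm gives
\[
\big|\,\|(M+E)u\|_2 - \|Mu\|_2\,\big| \leq \|Eu\|_2 \leq \|E\|_2 .
\]
Feeding the lower bound $\|(M+E)u\|_2 \geq \|Mu\|_2 - \|E\|_2$ into the min-max formula---first taking the minimum over unit vectors in a fixed subspace $S$, then the maximum over $i$-dimensional $S$---yields $g_i(M+E) \geq g_i(M) - \|E\|_2$ for each $i$.

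Next I would exploit the symmetry of the set-up. Writing $M = (M+E) + (-E)$ and noting $\|-E\|_2 = \|E\|_2$, the identical argument with the roles of $M$ and $M+E$ interchanged gives $g_i(M) \geq g_i(M+E) - \|E\|_2$. Combining the two one-sided bounds produces $|g_i(M+E) - g_i(M)| \leq \|E\|_2$ for every $i$, and taking the maximum over $i$ delivers the claim. An equally valid alternative I would mention is to pass to the Jordan--Wielandt symmetric embedding $\widetilde{M} = \begin{pmatrix} 0 & M \\ M\tr & 0\end{pmatrix}$, whose eigenvalues are $\pm g_i(M)$, and apply the eigenvalue Weyl inequality $|\lambda_i(A+B)-\lambda_i(A)| \leq \|B\|_2$ to $\widetilde{M}$ and $\widetilde{M}+\widetilde{E}$, using that the embedding preserves the spectral norm.

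The step I expect to require the most care is not any single inequality but the bookkeeping of indices: one must ensure that the sorted singular values of $M$ and $M+E$ are being compared in matching positions. This is precisely what the min-max characterisation secures automatically, since it defines $g_i$ intrinsically as an extremal value over $i$-dimensional subspaces rather than through any particular factorisation; once that characterisation is in place, the remaining arguments are routine.
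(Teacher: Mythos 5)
Your proposal is a correct and complete proof of this singular-value perturbation bound (the Weyl--Mirsky inequality): the min-max characterisation, the uniform pointwise estimate $\bigl|\Vert(M+E)u\Vert_2-\Vert Mu\Vert_2\bigr|\leq\Vert E\Vert_2$, and the symmetric application with $M=(M+E)+(-E)$ together give the two one-sided bounds, and the index bookkeeping is indeed handled automatically by the variational definition. The paper itself offers no proof of this lemma --- it is stated as adapted from Stewart (1990) and used as a black box --- so there is nothing to compare against; your argument is precisely the standard one from that reference, and the Jordan--Wielandt embedding you mention is the usual alternative.
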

We are now able to state our result. 
\begin{thm}
\label{thm:svd}
Let $M_n$ be the sample covariance matrix of $X_n\R ^{n\times p}$ and $M_{n-1}$ be the sample covariance matrix of $X_{n-1}\R ^{(n-1)\times p}$. $g_{r}(M)$ is the $r^{th}$ ordered singular value of $M$ then
\[
\max_i\left| g_i(M_n) - g_i(M_{n-1})  \right| \leq m_{n-1}
\]
\end{thm}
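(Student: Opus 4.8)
The plan is to chain the two lemmas together: Lemma~\ref{lem:sv} converts the quantity we wish to bound, a difference of ordered singular values, into a spectral-norm distance between the two matrices, and Lemma~\ref{thm:rud} then controls that distance by routing both matrices through the common population covariance $\E x\tr x$. Concretely, I would set $E = M_{n-1} - M_n$ and regard $M_{n-1} = M_n + E$ as a perturbation of $M_n$. Lemma~\ref{lem:sv} applied with this $E$ gives immediately
\[
\max_i \left| g_i(M_n) - g_i(M_{n-1}) \right| \leq \Vert M_n - M_{n-1} \Vert_2,
\]
so the whole problem reduces to bounding the spectral norm of the difference of the two sample covariance matrices.

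Next I would insert $\E x\tr x$ and use the triangle inequality,
\[
\Vert M_n - M_{n-1} \Vert_2 \leq \Vert M_n - \E x\tr x \Vert_2 + \Vert \E x\tr x - M_{n-1} \Vert_2 .
\]
Taking expectations and applying Lemma~\ref{thm:rud} separately to each term, once with $n$ samples and once with $n-1$ samples, bounds the right-hand side by $m_n + m_{n-1}$. Since $m_k = C\sqrt{(\log k)/k}\,A$ is decreasing in $k$ for $k$ large, we have $m_n \leq m_{n-1}$, so the sum is at most $2 m_{n-1}$; folding the factor of two into the constant $C$ then yields the stated bound $m_{n-1}$. This is where the hypothesis that $n$ is sufficiently large enters, both to guarantee the monotonicity of $m_k$ and to ensure that the normalisation condition $m<1$ of Lemma~\ref{thm:rud} is met.

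The main obstacle I anticipate is reconciling the deterministic-looking conclusion of the theorem with the fact that Lemma~\ref{thm:rud} only controls the \emph{expected} spectral deviation. The cleanest resolution is to read the inequality as a bound in expectation, or to invoke a high-probability form of Lemma~\ref{thm:rud}, so that the claim holds up to the advertised $O(\sqrt{(\log n)/n})$ order with high probability rather than surely. A secondary technical point is the mismatch in normalisation between $M_n$ and $M_{n-1}$, which average over different numbers of terms; this is precisely why I would route both matrices through the shared limit $\E x\tr x$ rather than attempt to subtract them directly, since each converges to that common target at the rate supplied by Lemma~\ref{thm:rud} and the difference inherits the slower of the two rates.
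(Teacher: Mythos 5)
Your proposal follows essentially the same route as the paper: both arguments chain Lemma~\ref{thm:rud} and Lemma~\ref{lem:sv}, reducing the singular-value gap to the spectral norm $\Vert M_n - M_{n-1}\Vert_2$ and then controlling that norm by routing both sample covariance matrices through the common population covariance. The only substantive difference is in how that last bound is extracted, and here your version is actually the more defensible one. The paper claims to obtain $\Vert M_n - M_{n-1}\Vert_2 \leq m_{n-1}$ by ``subtracting'' the two inequalities $\Vert M - M_n\Vert_2 \leq m_n$ and $\Vert M - M_{n-1}\Vert_2 \leq m_{n-1}$ and invoking Minkowski; but inequalities pointing in the same direction cannot be subtracted, and the displayed intermediate step does not imply the stated conclusion. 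The honest consequence of the triangle inequality is exactly what you wrote, $\Vert M_n - M_{n-1}\Vert_2 \leq m_n + m_{n-1} \leq 2m_{n-1}$, with the factor of two absorbed into the unspecified constant $C$ --- which changes nothing about the advertised $O(\sqrt{(\log n)/n})$ rate. You also flag, correctly, that Lemma~\ref{thm:rud} bounds only the \emph{expected} spectral deviation, so the conclusion should be read in expectation or upgraded to a high-probability statement; the paper silently treats it as deterministic. In short: same skeleton, but your execution of the norm-difference step repairs a genuine lapse in the paper's own proof rather than reproducing it.
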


\begin{proof}
To prove this theorem, we first establish a bound on the error between the sample covariance matrices $M_n$ and $M_{n-1}$ by applying Lemma \ref{thm:rud} with $n$ and $n-1$ samples to obtain the following inequalities\begin{eqnarray}
\left\Vert M - M_n \right\Vert_2 &\leq & m_{n}\label{eq:n}\\
\left\Vert M - M_{n-1} \right\Vert_2 &\leq & m_{n-1} \label{eq:nmin}
\end{eqnarray}
subtracting Eq \eqref{eq:n} from Eq \eqref{eq:nmin} and applying Minkowski's inequality we arrive at an expression for the difference between terms $M_n$ and $M_{n-1}$ as follows
\[
\left\Vert M - M_{n-1} \right\Vert_2 - \left\Vert M_n - M_{n-1} \right\Vert_2 \leq m_n - m_{n-1}
\]
\begin{equation}
\left\Vert M_n - M_{n-1} \right\Vert_2 \leq  m_{n-1}
\label{eq:amin}
\end{equation}
We now relate this result to the difference between computing the SVD of $M_n$ and the SVD of $M_{n-1}$ by recognizing that $M_n$ is obtained as a result of perturbing $M_{n-1}$ by $M_1=x_1\tr x_1$ where $x_1\R^{1\times p}$ is the single observation missing from $M_n$. Using Lemma \ref{lem:sv} and Eq \eqref{eq:amin} we obtain
\begin{eqnarray}
\max_i\vert g_i(M_{n-1}+M_1) - g_i(M_{n-1})\vert & \leq & \Vert M_{n}-M_{n-1} \Vert_2 \nonumber \\
& \leq &   m_{n-1}
\end{eqnarray}
which proves the theorem.
\end{proof}
This theorem details an upper bound on the maximum difference between pairs of ordered singular values of the covariance matrix of $M$ estimated with all $n$ observations and the covariance matrix estimated with $n-1$ observations. Since $A$ and the constant $C$ do not depend on $n$ and so are the same for $m_n$ and $m_{n-1}$. Therefore, the value of the error term defined by the bound decreases as $O (\sqrt{\frac{\log n}{n}})$.
\bigskip

\section{Model selection in Sparse PLS}

\label{sec:applications}

Although the dimensionality reduction inherent in PLS is often able to extract important features from the data, in some situations where $p$ and $q$ are very large, there may be many irrelevant and noisy variables which do not contribute to the predictive relationship between $X$ and $Y$. Therefore, for the purposes of obtaining a good prediction and for interpreting the resulting regression coefficients, is it important to determine exactly which variables are the important ones and to construct a regression model using only those variables. Sparse PLS regression has found many applications in various areas, including genomics \cite{Cao2008}, computational finance \cite{McWilliams2010} and machine translation \cite{Hardoon2009}.

A Sparse PLS algorithm \cite{McWilliams2010} can be obtained by rewriting the SVD, $M=UGV\tr$ as a LASSO penalized regression problem
\begin{equation}
\min_{\tilde{u},\tilde{v}}\left\Vert M-\tilde{u}\tilde{v}\tr \right\Vert_2
^{2}+\gamma\left\Vert \tilde{u} \right\Vert_{1} ~~~ \text{s.t.}
~~\left\Vert \tilde{v} \right\Vert_2=1 \label{EQspls} \text{ ,}
\end{equation}
where $\tilde{u}$ and $\tilde{v}\in\mathbb{R}^{p\times1}$ are the estimates of
$u^{(1)}$ and $v^{(1)}$, the first left and right singular vectors respectively. As such, they are restricted to be vectors with unit
norm so that a unique solution may be obtained. The amount of sparsity in the solution is controlled by $\gamma$. If
$\gamma$ is large enough, it will force some variables to be
exactly zero. The problem of Eq. \eqref{EQspls} can be solved in an iterative
fashion by first setting  $\tilde{u}=u^{(1)}$ and
$\tilde{v}=v^{(1)}$ as before. The Lasso penalty can then be applied as a
component-wise soft thresholding operation on the elements of
$\tilde{u}$ (see, for instance, \cite{Friedman2007}). The sparse $\tilde{u}$ are
found by applying the threshold component-wise as follows:
\begin{eqnarray*}
\label{penalty}
\tilde{u}^{*} & = & \text{sgn}\left(H v\right)\left(\left|H
v\right|-\gamma\right)_{+}\\
\tilde{v}^{*} & = & H\tilde{u}^{*}/\left\Vert H\tilde{u}^{*}\right \Vert_2 \text{
.}
\end{eqnarray*} 

Typically, selecting the optimal sparsity parameter, $\lambda$ involves evaluating the CV error over a grid of points in $[\lambda_{min},\lambda_{max}]$. However, since the behaviour of the CV error as a function of $\lambda$ is not well understood this often requires specifying a large number of points which exacerbates the already large computational burden of performing cross validation. In such cases, model selection is often performed using problem-specific knowledge and intuition. 
Problems of performing model selection in sparse PLS are discussed by \cite{McWilliams2010}. Since the PLS PRESS is computationally efficient for large $n,p$ and $q$ we can evaluate the Sparse PLS model over a large grid of values of $\lambda$ and quickly compute the PRESS to determine the best cross-validated degree of sparsity. Some experimental results are presented in the following Section.

\section{Experiments}

\label{sec:data}
In this section we report on the performance of the PRESS in experiments on simulated data where the predictive relationship between the covariates and the response is dependent on latent factors according to the PLS model in Eq \eqref{pls_model}
where $E_x \R ^{n\times p}$ and $E_y \R ^{n\times q}$ are matrices of i.i.d random noise simulated from a normal distribution, $N(0,1)$. For a fixed value of $R$, $n$ and $p=q$ we simulate $R$ pairs of latent factors $t$ and $s$ of length $n$ from a bivariate normal distribution in descending order of covariance with means drawn from a uniform distribution. Secondly we simulate $R$ separate pairs of loading vectors $u$ and $v$ of length $p$ and $q$ respectively from a uniform distribution, $U(0,1)$. In order to ensure the contribution of each latent factor is orthogonal, we orthogonalise the vectors $U=[u_1,...,u_R]$ with respect to each other using the $QR$ decomposition, similarly for $V=[v_1,...,v_R]$. 

To test the performance of the PRESS for selecting the number of latent factors, we perform a Monte Carlo simulation where for each iteration we draw $R$ as an integer from $U(2,8)$ so that the true number of latent factors in the simulated data is constantly changing. We measure the sensitivity of model selection using the PRESS and LOOCV by comparing the number of latent factors which minimizes these quantities minimum with the true value of $R$.

To test the performance of the PRESS for sparse PLS, we use the same simulation setting except now we fix $R=1$ and induce sparsity into the latent factor loadings for $X$, $u_r$. Now $\lv u_r \rv _0=p/j$ which implies that only $p/j$ of the $p$ elements in $u_r$ are non-zero and thus contribute to the predictive relationship between $X$ and $Y$. By altering $j$, we change how many of the variables in $X$ are useful for predicting $Y$. 

We perform a Monte Carlo simulation whereby for each iteration we randomize the true number of important variables in $X$ by drawing $j$ from $U(1,2)$ so that up to half of the variables in $X$ may be unimportant. We evaluate sparse PLS over a grid of $100$ values of $\gamma$ which span the parameter space. We measure the sensitivity of model selection using the PRESS as compared to LOOCV by comparing the selected variables with the truly important variables. 

Table I reports on the ratio between the sensitivity achieved by the PRESS, $\pi_{PRESS}$ and the sensitivity of LOOCV, $\pi_{LOOCV}$ for both of these settings averaged over 200 trials for different values of $n$, $p$ and $q$. When selecting $R$, PRESS and LOOCV achieve almost exactly the same sensitivity for all values with only very small discrepancies when $n<p$. When selecting $\gamma$ the error between PRESS and LOOCV is noticeable when $p$ and $q$ are large compared with $n$. The standard error of the sensitivity ratio (in parenthesis) also increases when $p$ and $q$ are large.

\begin{table}[h]
\begin{center}
\begin{tabular}{|r|r|l|l|l|l|}
\hline
& & \multicolumn{4}{|c|}{$\pi_{PRESS} / \pi_{LOOCV}$}  \\
\hline
\multicolumn{1}{|c|}{$n$}
& \multicolumn{1}{|c|}{$p,q$}
& \multicolumn{2}{|c|}{Selecting $R$}
& \multicolumn{2}{|c|}{Selecting $\gamma$}
\\ \hline
 \multirow{3}{*}{50}   & 100 &   $1.0$ & $(0.303)$ &  $0.92$ & $(0.228)$ \\ 
   										 & 500 &   $0.98$ & $(0.390)$ & $0.73$ & $(0.344)$  \\ 
    									 & 1000 &  $0.95$ & $(0.351)$ & $0.70$ & $(0.333) $ \\ \hline
 
 \multirow{3}{*}{100}  & 100 &   $1.0$ & $(0.141)$ &  $0.99$ & $(0.164)$ \\ 
   										 & 500 &   $0.99$ & $(0.412)$ &  $0.91$ & $(0.190)$  \\ 
    									 & 1000 &  $0.99$ & $(0.389)$ &  $0.90$ & $(0.203)$ \\ \hline
 
 \multirow{3}{*}{200}  & 100 &   $1.0$ & $(0.0)$ &  $1.0$ & $(0.071)$ \\ 
   										 & 500 &   $1.0$ & $(0.095)$ &  $0.95$ & $(0.055)$  \\ 
    									 & 1000 &  $1.0$ & $(0.332)$ &  $0.91$ & $(0.158)$ \\ \hline
    									 
 \multirow{3}{*}{300}  & 100 &   $1.0$ & $(0.0)$ &  $1.0$ & $(0.014)$ \\ 
   										 & 500 &   $1.0$ & $(0.1)$ &  $1.0$ & $(0.055)$ \\ 
    									 & 1000 &  $1.0$ & $(0.127)$ &  $0.96$ & $(0.11)$ \\ \hline   
 
 \multirow{3}{*}{500}  & 100 &   $1.0$ & $(0.0)$ &  $1.0$ & $(0.0)$ \\ 
   										 & 500 &   $1.0$ & $(0.095)$ &  $1.0$ & $(0.0)$  \\ 
    									 & 1000 &  $1.0$ & $(0.12)$ &  $1.0$ & $(0.017)$ \\ \hline    									  									 
\end{tabular}
\label{tab-sens}
\end{center}
\caption{Comparing the ratio of the sensitivity, $\pi$ (the proportion of times the correct model is chosen), $\pi_{PRESS}/\pi_{LOOCV}$ when selecting $R$ and $\gamma$ as a function of $n$, $p$ and $q$. The value in parenthesis is the Monte Carlo standard error. 
}
\end{table}

Figure \ref{fig_Vars} compares the computational time using PRESS and LOOCV for selecting $\gamma$ as a function of $n$ for different values of $p$ and $q$. We report on computational timings using a 2.0GHz Intel Core2Duo with 4GB of RAM. Relative timing is measured using the \texttt{tic, toc} function in Matlab v7.8.0. It can be seen that increase in computation time for LOOCV is quadratic as a function of $n$ and the number of variables. In comparison, the increase in computation time for PRESS is linear in these quantities and very small relative to LOOCV. Because of this, it becomes computationally prohibitive to perform LOOCV with a greater number of observations or variables than we have presented.

Figure \ref{fig_Obserr} reports on the approximation error between LOOCV and PRESS for $p=q=100$ as a function of $n$. As the number of samples increases, it can be seen that the error decreases as $O(\sqrt{\frac{\log(n)}{n}})$.

\begin{figure}[htp]
\centerline{\includegraphics[width=3.35in]{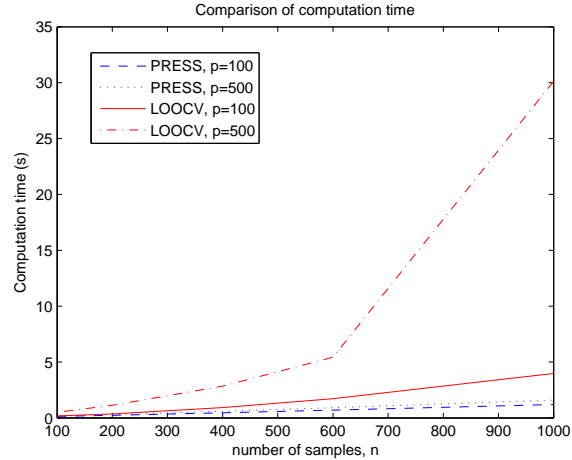}}
\caption{Comparison of computational timing between PRESS and LOOCV. For fixed values of $p$ and $q$, computational time (in seconds) for PRESS is linear in the number of observations, $n$  whereas for LOOCV, the computational time increases as a function of $n^2$. PRESS is also linear in the number of variables whereas LOOCV is quadratic in this quantity.}
\label{fig_Vars}
\end{figure}

\begin{figure}[htp]
\centerline{\includegraphics[width=3.35in]{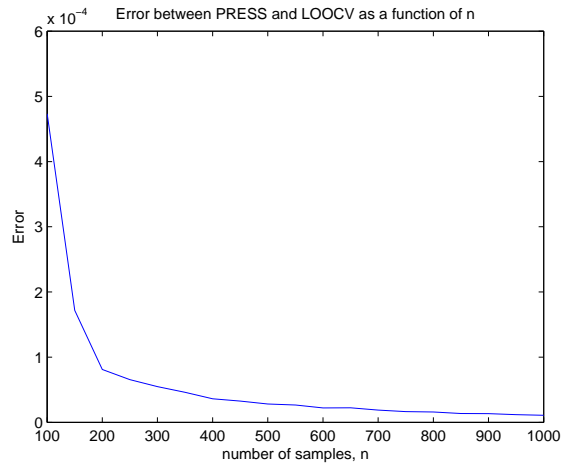}}
\caption{The approximation error between LOOCV and PRESS. For a fixed $p$ and $q=100$, the approximation is small. As the number of samples, $n$ increases, the error between LOOCV and PRESS decreases as $O(\sqrt{\frac{\log(n)}{n}})$. See table \ref{tab-sens} for further results.}
\label{fig_Obserr}
\end{figure}


In the simulations we have focussed on situations where the response is multivariate. However, the case where the response is univariate $(q=1)$ also commonly occurs. In this situation, the latent factor for $Y$ collapses to a scalar and so we would expect the error between LOOCV and PRESS to be smaller.




\section{Conclusions}
We have showed that in order to obtain good prediction in high dimensional settings using PLS, a computationally efficient and accurate model selection method is needed to tune the many crucial parameters. In this work we have derived an analytic form for the PRESS statistic for two-block PLS regression and we have proved that under mild assumptions, the PRESS is equivalent to LOOCV. 

We have also presented an example application where the PRESS can be used to tune additional parameters which improve prediction and interpretability of results in the case of sparse PLS. We have showed through simulations that using the PRESS to tune the number of latent factors, $R$ and the sparsity parameter $\gamma$, performs almost identically to LOOCV which is the method most commonly used in the literature at a far lower computational cost. When the number of samples is large, LOOCV and PRESS perform identically.

Although we have showed that the analytic PRESS statistic for two-block PLS regression is an important contribution and can be easily applied to the many settings where parameters must be tuned accurately, there are still opportunities for further work. Another such possibility is to construct an objective function in terms of $\gamma$, the regularization parameter, so that the optimal degree of sparsity can be tuned automatically. 

\bibliographystyle{IEEEtran}
\bibliography{IEEEabrv,press}

\end{document}